\documentclass[conference]{IEEEtran}
\IEEEoverridecommandlockouts
\usepackage{cite}
\usepackage{amsmath,amssymb,amsfonts,amsthm}
\usepackage{algorithmic}
\usepackage{graphicx}
\usepackage{textcomp}
\usepackage{xcolor}
\usepackage{tikz}
\usepackage{cleveref}
\usepackage{subcaption}
\usepackage{enumitem}
\usepackage{calc}

\usetikzlibrary{calc}
\usetikzlibrary{shapes}
\usetikzlibrary{positioning}
\usetikzlibrary{decorations.pathreplacing}

\newtheorem{example}{Example}
\crefname{example}{Ex.}{Examples}
\Crefname{example}{Ex.}{Examples}

\newtheorem{definition}{Definition}
\crefname{definition}{Def.}{Definitions}
\Crefname{definition}{Def.}{Definitions}

\newtheorem{theorem}{Theorem}
\crefname{theorem}{Thm.}{Theorems}
\Crefname{theorem}{Thm.}{Theorems}

\crefname{figure}{Fig.}{Figures}
\Crefname{figure}{Fig.}{Figures}

\crefname{equation}{Eq.}{Equations}
\Crefname{equation}{Eq.}{Equations}

\crefname{section}{Sec.}{Sections}
\Crefname{section}{Sec.}{Sections}

\begin{document}

\title{PERIDOT Codes: Replacing Identifiers, Sequence Numbers and Nonces with Permutations}

\author{\IEEEauthorblockN{Florian Euchner and Christian Senger}
\IEEEauthorblockA{
Institute of Telecommunications, University of Stuttgart, Stuttgart, Germany \\
Email: \{euchner, senger\}@inue.uni-stuttgart.de}
}

\maketitle

\begin{abstract}
    Identifiers and sequence numbers make up a large part of the protocol overhead in certain low-power wide-area networks.
    The requirement for cryptographic nonces in authentication and encryption schemes often demands excessively long sequence numbers, which leads to an increase in energy consumption per transmitted packet.
    In this paper, the novel PERIDOT coding scheme is proposed.
    It replaces identifiers and sequence numbers with a code, based on which receivers can identify transmitters with high confidence.
    PERIDOT is based on specially constructed integer permutations assigned to transmitters.
    An upper bound on the performance of PERIDOT codes is provided and methods for constructing particularly suitable permutations are presented.
    In practice, PERIDOT can significantly increase intervals between nonce reuses and, at the same time, reduce power consumption.
\end{abstract}

\begin{IEEEkeywords}
LPWAN, sequence number, permutation, nonce, identification
\end{IEEEkeywords}

\section{Introduction}
\label{sec:intro}
Low-power wide-area network (LPWAN) protocols like LoRaWAN, Sigfox, MIOTY and others are commonly used to wirelessly transmit packets with very small application payloads, sometimes as small as a single bit.
Devices connected to such networks are often uplink-only and their transmissions are powered by small batteries or by energy harvesting.

Nevertheless, since LPWAN devices may transmit sensitive information, the various protocols define authentication schemes based on message authentication codes (MACs) and symmetric encryption schemes.
These authentication and encryption methods need nonces, i.e. a number has to be transmitted alongside the application payload and this number should be reused as rarely as possible, preferably only once.
In practice, the sequence number (SN) of packets, a counter that is traditionally used to detect packet loss, gets a dual-use as nonce.
Protocol designers have to find a trade-off between longer SNs (improved security due to longer intervals between nonce reuses) and low energy consumption:
The transmission of every single bit consumes energy and the device's overall energy consumption is almost proportional to the number of bits transmitted, since most other tasks (e.g. reading sensor values) are negligible in their energy consumption.

For instance, in the case of Sigfox, an uplink packet with a single-bit application payload has a total length of 112 bit including the preamble, 12 bit of which are used as SN and nonce \cite{sigfoxspec}.
This obviously makes replay attacks feasible since SNs have to be reused every $2^{12} = 4096$ packets.
LoRaWAN, on the other hand, employs a 32 bit SN internally, only the 16 least significant bits of which are transmitted - the receiver has to infer the remaining bits from observed uplinks \cite{loraspec}.
This scheme is stateful and potentially error-prone.

In our protocol design approach, we replace SNs altogether and merge them with identifiers into a single number, called code number (CN).
The same CN is guaranteed to recur as rarely as possible, hence it can have a dual-use as nonce.
The gist of the scheme, which we will call Permutation-based Identification and Order Tracking (PERIDOT), is best explained by two examples based on the following scenario:

\begin{figure}
    \centering
    \begin{tikzpicture}
        \node [rectangle, draw, thick, minimum height = 0.6cm, minimum width = 4cm] (tx) at (0, 0.6) {Transmitter};
        
        \node (packet) [minimum width = 4cm, minimum height = 0.5cm, draw, thick] at (2.5, -0.3) {};
        \node (nonce) [minimum width = 1cm, minimum height = 0.5cm, draw, thick, fill = black!10!white] at (1, -0.3) {\footnotesize CN};
        \node (data) [minimum width = 2.0cm, minimum height = 0.5cm, draw, thick, fill = black!10!white] at (2.5, -0.3) {\footnotesize Data};
        \node (mac) [minimum width = 1cm, minimum height = 0.5cm, draw, thick, fill = black!10!white] at (4, -0.3) {\footnotesize MAC};
        
        \node (c_pec) [minimum width = 4cm, minimum height = 0.6cm, draw, thick] at (0, -1.2) {Packet Erasure Channel};
        \node (erasure) at (3.5, -1.2) {Erasure};
        \node [rectangle, draw, thick, minimum height = 0.6cm, minimum width = 4cm] (rx) at (0, -2.6) {Receiver};

        \draw [-latex] (tx) -- (c_pec.north);
        \draw [-latex] (c_pec.south) -- (rx) node[midway, right] {$1 - \varepsilon$};
        \draw [-latex] (c_pec.east) -- (erasure) node[midway, above] {$\varepsilon$};
    \end{tikzpicture}
    \caption{Channel model: a single Packet Erasure Channel}
    \label{fig:channelmodel}
\end{figure}
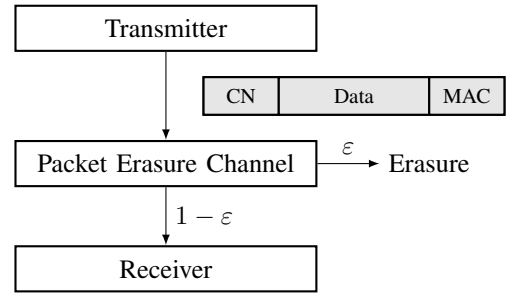

We assume that, as shown in \cref{fig:channelmodel}, a packet is transmitted over a packet erasure channel (PEC) with erasure probability $\varepsilon$.
In our model, the receiver is unaware of the occurrence of an erasure.
Let the CN be from the alphabet $Z_\mathrm{q} = \{ 0, \ldots, q - 1 \}$.
Our hard requirement is that, since CNs are used as nonces, once a particular CN has been transmitted, the following $q - 1$ packets must not contain the same CN again.
Thus, the order in which CNs are transmitted has to be a cyclic permutation of the elements of $Z_\mathrm{q}$.
That is, the permutation has only a single cycle whose orbit is $Z_\mathrm{q}$.
This permutation $\sigma: Z_\mathrm{q} \to Z_\mathrm{q}$ maps the previously transmitted CN $u$ to the next CN $v = \sigma[u]$.
We use $\circ$ to denote composition of permutations and $\sigma^\beta$ for $\beta$-fold composition.
In both examples, it is assumed that there is \emph{only a single transmitter} communicating with a single receiver.

\begin{figure*}
    \begin{subfigure}[b]{0.33\textwidth}
        \centering
        \includegraphics{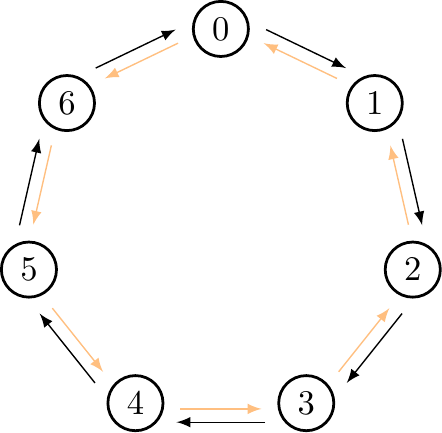}
        \caption{Permutations in a $(7, 3)$-proper set}
        \label{fig:p73}
    \end{subfigure}
    \begin{subfigure}[b]{0.33\textwidth}
        \centering
        \includegraphics{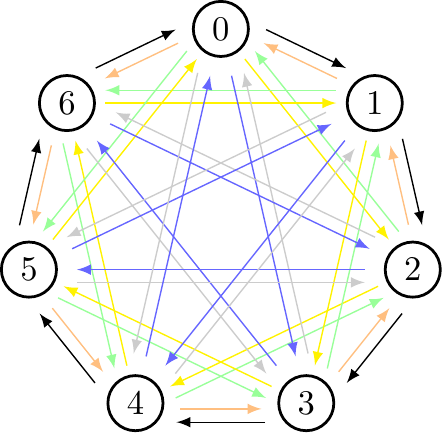}
        \caption{Permutations in a $(7, 1)$-proper set}
        \label{fig:p71}
    \end{subfigure}
    \begin{subfigure}[b]{0.33\textwidth}
        \centering
        \includegraphics{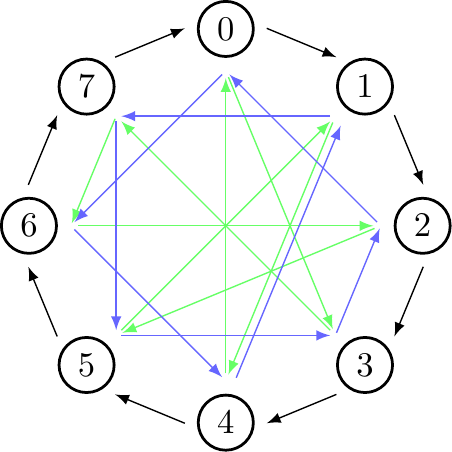}
        \caption{Permutations in an $(8, 2)$-proper set}
        \label{fig:p28}
    \end{subfigure}
    \caption{Illustrations of different permutations (order in which CNs are sent), one color per permutation}
\end{figure*}

\begin{example}
    For some $q \geq 3$, define an incrementing and a decrementing permutation as
    \begin{equation*}
        \sigma_\mathrm{inc}: u \mapsto (u + 1) \bmod q \qquad \sigma_\mathrm{dec}: u \mapsto (u - 1) \bmod q
    \end{equation*}
    
    and assign the permutations to one of two transmitters each.
    The permutations are illustrated in \cref{fig:p73} for $q = 7$.
    One of the two transmitters is selected to send a packet containing an arbitrary CN $u \in Z_\mathrm{q}$ and then a packet with CN $v = \sigma[u]$, where $\sigma$ is its assigned permutation.
    The receiver receives the packets in that order.
    For erasure probability $\varepsilon = 0$, the receiver is always able to identify the device based on $u$ and $v$ (e.g. $u = 5$ and $v = 6$ can only stem from permutation $\sigma_\mathrm{inc}$).
    Even for nonzero $\varepsilon$, the receiver can still try to identify the transmitter using two non-erased packets based on which permutation, $\sigma_\mathrm{inc}$ or $\sigma_\mathrm{dec}$, the CNs contained in the packets were more likely drawn from.

    \label{ex:inc_dec}
\end{example}

\begin{example}
    Suppose $\varepsilon = 0$, i.e. there are no erasures.
    If $q$ is prime and there are $q - 1$ transmitters or less, we can uniquely assign a number $\Delta \in Z_\mathrm{q} \backslash \{ 0 \}$, called increment, to each of them, so that by knowing $\Delta$, the transmitter is identified.
    The transmitters are configured to send nonces according to the permutation $\omega_\Delta: u \mapsto (u + \Delta) \bmod q$.
    Then the receiver can determine a transmitter's identity based on two consecutively received nonces $u$ and $v$: $\Delta = (v - u) \bmod q$.
    This can also be seen from the illustration for $q = 7$ in \cref{fig:p71}.

    For non-prime $q$, $\omega_\Delta: u \mapsto (u + \Delta) \bmod q$ is no longer a permutation of $Z_\mathrm{q}$ in case $\gcd(\Delta, q) \neq 1$.
    For most $q$, solutions with $q - 1$ permutations can nevertheless be found: Formalizing the requirements, we want to find a set of permutations $P$ such that
    \begin{equation*}
        \forall (u, v) \in Z_\mathrm{q}^2 ~ \exists_{\leq 1} \sigma \in P: ~ v = \sigma[u].
    \end{equation*}
    In other words, the transition from $u$ to $v$ must be unique among all transmitters.
    Regarding $Z_\mathrm{q}$ as the vertex set of a directed graph, an equivalent problem is to find a set of edge-disjoint directed Hamiltonian cycles in the complete directed graph $K^*_q$, where the permutations $\sigma \in P$ are given by the order in which cycles visit nodes.
    From Walecki's construction \cite{walecki} and from Tillson's paper \cite{tillson}, we find solutions with $q - 1$ directed Hamiltonian cycles and thus permutations for all $q \in Z_\mathrm{q} \backslash \{ 1, 2, 4, 6 \}$.
    The cases $q \in \{ 1, 2 \}$ are trivial and, by full search, we find that it is not possible to find $q - 1$ directed Hamiltonian cycles for $q \in \{ 4, 6 \}$.

    Other than in \cref{ex:inc_dec}, any packet loss between the reception of $u$ and $v$ results in misidentification.

    \label{ex:hamiltonian}
\end{example}

These examples show that CNs may carry both identifying information and information facilitating the detection of lost packets.
In \cref{ex:inc_dec}, the receiver can only discern two devices, but is guaranteed to tolerate up to $\lfloor (q - 1) / 2 \rfloor - 1$ consecutively lost packets whereas in \cref{ex:hamiltonian}, packet loss is not tolerable, but the receiver can identify up to $q - 1$ devices.
These examples are two extreme ends in a spectrum of conceivable schemes that permit distinguishing transmitters based on CN permutations assigned to them.
In the following \cref{sec:lqsets}, we will formalize this notion, define a necessary and sufficient criterion for such schemes and provide an upper bound.
In \cref{sec:construction}, we will provide a particularly suitable construction of sets of permutations and in \cref{sec:multitx} and \cref{sec:sigfox} we will explore how PERIDOT can be practically implemented for systems with many transmitters.

\section{$(q, l)$-Proper Sets and Upper Bound}
\label{sec:lqsets}
We assume that the receiver detects two packets from the transmitter with CNs $u, v \in Z_\mathrm{q}$ in that order.
We count the number of packets that were transmitted after $u$, including the one containing $v$, and call this number $\beta$.
The number of packets lost between $u$ and $v$ is thus $\beta - 1$ and we always have $\beta \geq 1$.
Moreover, we assume that at most $l - 1$ packets are lost over the PEC, such that $\beta$ can be bounded by $\beta \leq l$.

\begin{definition}[$(q, l)$-proper sets]
    A set $P$ of cyclic permutations with orbit $Z_\mathrm{q}$ is called a $(q, l)$-proper set for $l \in \mathbb N_0, l < q$ if, and only if, we have
    \begin{equation*}
        \forall (u, v) \in Z_\mathrm{q}^2 ~ \forall \beta \in \mathbb N, \beta \leq l ~ \exists_{\leq 1} \sigma \in P: ~ v = \sigma^\beta[u].
    \end{equation*}
    \label{def:plq}
\end{definition}

Intuitively, \cref{def:plq} states that a $(q, l)$-proper set $P$ is a set in which we can \emph{uniquely} identify each permutation $\sigma \in P$ based on an ordered pair $(u, v)$, where this ordered pair is generated by arbitrarily choosing $u \in Z_\mathrm{q}$ and then computing $v$ as $v = \sigma^\beta[u]$ with some $\beta \leq l$.
This corresponds to a device successfully transmitting the CN $u$, then loosing $\beta - 1 \leq l - 1$ messages (with CNs $\sigma[u]$, $\sigma^2[u]$, \ldots, $\sigma^{\beta - 1}[u]$) and then successfully transmitting $v = \sigma^\beta[u]$.
The receiver can identify the permutation $\sigma$ and thus the transmitter based on $(u, v)$.

We see that the permutations in a $(q, l)$-proper set are suitable sequences of CNs.
In this notation, \cref{ex:inc_dec} provides a construction for a $(q, \lfloor (q - 1) / 2 \rfloor)$-proper set whereas \cref{ex:hamiltonian} describes $(q, 1)$-proper sets.
The following is a nontrivial example for a $(q, l)$-proper set (for an unrealistically small alphabet size of $q = 8$):

\begin{figure}
    \centering
    \includegraphics{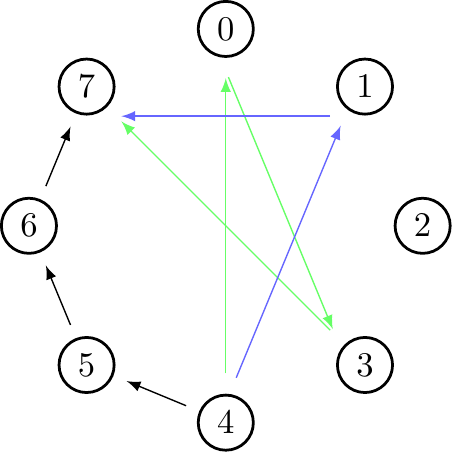}
    \caption{Transitions from $u = 4$ to $v = 7$ in a $(8, 2)$-proper set}
    \label{fig:example_transition}
\end{figure}

\begin{example}
    A $(8, 2)$-proper set is given by $P = \{ \sigma, \textcolor{green!50!black}{\rho}, \textcolor{blue!50!black}{\pi} \}$ with
    \begin{align*}
        \textcolor{black} \sigma &= (0, 1, 2, 3, 4, 5, 6, 7) \qquad \textcolor{green!50!black} \rho = (0, 3, 7, 6, 2, 5, 1, 4) \\
        \textcolor{blue!50!black} \pi &= (0, 6, 4, 1, 7, 5, 3, 2)
    \end{align*}

    $\sigma$, $\rho$ and $\pi$ are given in cycle notation.
    There is no $(8, 2)$-proper set containing more than $|P| = 3$ permutations.
    $P$ was found by computer search and is illustrated in \cref{fig:p28}.
    For this exemplary set of permutations, if the receiver observes $u = 4$ and $v = 7$ in that order, it would generate the transition paths that $\sigma$, $\rho$ and $\pi$ would have to take to get from $u$ to $v$ as drawn in \cref{fig:example_transition}.
    From the fact that the path for permutation $\pi$ is shortest, it concludes that the most likely scenario is that the device to which $\pi$ was assigned transmitted $u$ and $v$, and that one packet was lost in between (the one with a CN of $1$).
\end{example}

The previous examples suggest that for fixed $l$ and $q$, there appears to be a maximum number of permutations that fit into a $(q, l)$-proper set, i.e. a maximum value for $m = |P|$.
This has important practical ramifications:
For a an alphabet size $q$, which is predetermined by the number of bits allocated to a CN, and $l - 1$ tolerable packet losses, we can only assign $m$ permutations to transmitters and thus only identify up to $m$ devices using PERIDOT.
Therefore, we now seek to formulate an upper bound that relates $m$, $l$ and $q$.
Such a bound will result as a consequence of an equivalent condition for $(q, l)$-proper sets based on what we call $l$-\textit{follower sets}:
\begin{definition}[$l$-follower sets]
    The $l$-follower set of some $u \in Z_\mathrm{q}$ in the permutation $\sigma$ of $Z_\mathrm{q}$ is given by
    \begin{equation*}
        \Omega_l^\sigma(u) = \left\{ \sigma^\beta[u] ~ | ~ 1 \leq \beta \leq l \right\}.
    \end{equation*}
\end{definition}

\begin{figure}
    \centering
    \begin{tikzpicture}
            \draw [radius = 0.28, very thick, draw = black] (0.0, 0) circle node (c0) {$7$};
            \draw [radius = 0.28, very thick, draw = black] (1.1, 0) circle node (c1) {$6$};
            \draw [radius = 0.28, very thick, draw = black] (2.2, 0) circle node (c2) {$9$};
            \draw [radius = 0.28, very thick, draw = black] (3.3, 0) circle node (c3) {$4$};
            \draw [radius = 0.28, very thick, draw = black] (4.4, 0) circle node (c4) {$5$};
            \draw [radius = 0.28, very thick, draw = black] (5.5, 0) circle node (c5) {$2$};

            \draw [-latex, shorten >= 0.1cm, dashed] ($(c0) - (1, 0)$) -- (c0);
            \draw [-latex, shorten <= 0.1cm, shorten >= 0.1cm] (c0) -- (c1);
            \draw [-latex, shorten <= 0.1cm, shorten >= 0.1cm] (c1) -- (c2);
            \draw [-latex, shorten <= 0.1cm, shorten >= 0.1cm] (c2) -- (c3);
            \draw [-latex, shorten <= 0.1cm, shorten >= 0.1cm] (c3) -- (c4);
            \draw [-latex, shorten <= 0.1cm, shorten >= 0.1cm] (c4) -- (c5);
            \draw [dashed, shorten <= 0.1cm] (c5) -- ($(c5) + (0.8, 0)$);

            \draw [color = red!50!black] (1.6, -0.5) rectangle (4.9, 0.5) node[midway] (omegamid) {};
            \node [above = 0.4cm of omegamid, color = red!50!black] {follower set $\Omega_3^\sigma(6)$};
        \end{tikzpicture}
    \caption{$3$-follower set of $6$ in some permutation}
    \label{fig:followerset}
\end{figure}
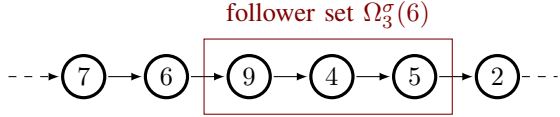

Figuratively, $\Omega_l^\sigma(u)$ contains all the CNs that succeed $u$ in the next $l$ messages, as illustrated in \cref{fig:followerset}.
It is clear that, if we have a set of alphabet permutations $P$ and all $l$ CNs that succeed $u$ are distinct from any permutation to another, then we can identify each permutation $\sigma \in P$ based on $(u, v)$ where $v$ is in the $l$-follower set of $u$ in the permutation $\sigma$.

Therefore, using the definitions of $(q, l)$-proper sets and follower sets, we can equivalently say that $P$ is a $(q, l)$-proper set with $l < q$ if, and only if, $\forall u \in Z_\mathrm{q}$ we have

\begin{equation}
    \forall \sigma, \pi \in P, ~ \sigma \neq \pi: ~ \Omega_l^\sigma(u) \cap \Omega_l^\pi(u) = \emptyset.
    \label{eq:plq_followerset}
\end{equation}

We can now derive an upper bound on $m = |P|$ from \cref{eq:plq_followerset}.
Since the sets $\Omega_l^\sigma$ are disjoint for different $\sigma \in P$ and because $|\Omega_l^\sigma(u)| = l$, we have
\begin{equation}
    \left| \bigcup_{\sigma \in P} \Omega_l^\sigma(u) \right| = \sum_{\sigma \in P} |\Omega_l^\sigma(u)| = |P| l \leq |Z_\mathrm{q} \backslash \{ u \}| = q - 1.
    \label{eq:followerset_almost_disjoint}
\end{equation}

With $m = |P|$, this means
\begin{equation*}
    lm + 1 \leq q.
\end{equation*}

Adding the restriction that $m$ must be integer and solving for $m$ finally yields the following upper bound:

\begin{theorem}
    For any $(q, l)$-proper set $P$ holds:
    \begin{equation*}
        m = |P| \leq \left\lfloor \frac{q - 1}{l} \right\rfloor
    \end{equation*}
    \label{thm:upperbound}
\end{theorem}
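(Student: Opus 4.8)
The plan is to prove the bound by a direct counting (pigeonhole) argument on the $l$-follower sets, relying on the equivalent characterization of $(q,l)$-proper sets already established in \eqref{eq:plq_followerset}. The crux is the inequality $lm + 1 \le q$; once that is in hand, restricting $m$ to the integers and dividing by $l$ delivers the floor bound immediately.

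First I would fix an arbitrary $u \in Z_\mathrm{q}$ and examine the family of follower sets $\{ \Omega_l^\sigma(u) : \sigma \in P \}$. The two facts I need about each individual set are that it has exactly $l$ elements and that it avoids $u$. Both follow from the hypothesis that every $\sigma \in P$ is a cyclic permutation whose single orbit is all of $Z_\mathrm{q}$: the iterates $u = \sigma^0[u], \sigma^1[u], \ldots, \sigma^{q-1}[u]$ then exhaust $Z_\mathrm{q}$ and are pairwise distinct. Since $\beta$ ranges over $1, \ldots, l$ with $l < q$, the $l$ values $\sigma^\beta[u]$ are distinct from one another and distinct from $u = \sigma^0[u]$. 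Hence $|\Omega_l^\sigma(u)| = l$ and $\Omega_l^\sigma(u) \subseteq Z_\mathrm{q} \setminus \{u\}$.

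Next I would invoke the $(q,l)$-proper property in the form \eqref{eq:plq_followerset}, which states precisely that these follower sets are pairwise disjoint for the fixed $u$. Disjoint sets have a union whose cardinality is the sum of the individual cardinalities, so
\begin{equation*}
    \left| \bigcup_{\sigma \in P} \Omega_l^\sigma(u) \right| = \sum_{\sigma \in P} |\Omega_l^\sigma(u)| = m l .
\end{equation*}
Because every term of this union lies inside $Z_\mathrm{q} \setminus \{u\}$, so does the union, giving $m l \le q - 1$. Adding $1$ and rearranging yields $l m + 1 \le q$, after which $m \le (q-1)/l$ together with $m \in \mathbb{N}_0$ forces $m \le \lfloor (q-1)/l \rfloor$.

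The argument is a clean pigeonhole estimate, so there is no deep obstacle; the only place demanding genuine care is justifying $|\Omega_l^\sigma(u)| = l$ and $u \notin \Omega_l^\sigma(u)$. These would fail if the permutations were permitted an orbit shorter than $q$, or if $l \ge q$, so it is essential to exploit both the single-cycle hypothesis built into \cref{def:plq} and the standing assumption $l < q$. Once those structural constraints are made explicit, the counting goes through without further subtlety.
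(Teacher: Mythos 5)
Your proof is correct and follows essentially the same route as the paper: fix $u$, use the follower-set characterization \eqref{eq:plq_followerset} to get pairwise disjointness, sum the cardinalities $|\Omega_l^\sigma(u)| = l$ inside $Z_\mathrm{q} \setminus \{u\}$ to obtain $ml \le q-1$, and take the floor. The only difference is that you spell out the justification for $|\Omega_l^\sigma(u)| = l$ and $u \notin \Omega_l^\sigma(u)$ from the single-cycle hypothesis and $l < q$, which the paper leaves implicit.
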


A similar upper bound (but for a more restrictive problem and an entirely different application) has previously been derived in \cite[eq. 1]{klove}.
On the basis of \cref{thm:upperbound}, we are now able to assess how ``good'' a $(q, l)$-proper set construction is:
We will call a $(q, l)$-proper set \emph{quasiperfect} if it reaches the upper bound, and \emph{perfect} if, additionally, $(q - 1) / l$ is integer.
We will see in the following that the bound in \cref{thm:upperbound} is tight.

\section{A Quasiperfect Construction}
\label{sec:construction}
For some specific values of $q$, $l$ that are also suitable for practical implementations of PERIDOT, it is possible to construct a quasiperfect $(q, l)$-proper set.
This construction is based on transmitters sending CNs in an order that is determined by an arithmetic progression modulo $q$, that is, a sequence of the form
\begin{equation}
    0, ~ \Delta \bmod q, ~ 2 \Delta \bmod q, ~ \ldots~.
\end{equation}
As in \cref{ex:hamiltonian}, the mapping from previous to next CN is given by the permutation $\omega_\Delta$ with \emph{increment} $\Delta \in Z_\mathrm{q} \backslash \{ 0 \}$ and
\begin{equation}
    \omega_\Delta: u \mapsto (u + \Delta) \bmod q.
    \label{eq:map}
\end{equation}
For $\omega_\Delta$ to be cyclic with orbit $Z_q$, we need to have $\gcd(\Delta, q) = 1$.
Note that modular arithmetic progressions have the added benefit of being easy to store (only need to store $q$, $\Delta$ at transmitter), which makes the implementation of PERIDOT on low-powered LPWAN end nodes feasible.

\begin{theorem}
    For prime $p$, $l \, | \, (p - 1)$ and $\omega_\Delta$ as in \cref{eq:map},
    \begin{equation*}
        P = \{ \omega_{1 + il} ~ | ~ i = 0, \ldots, p - 1, ~ i \neq (p - 1) / l \}
    \end{equation*}
    is a quasiperfect $(pl, l)$-proper set with $m = |P| = p - 1$.
    \label{thm:construction}
\end{theorem}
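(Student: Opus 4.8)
The plan is to establish the two facts that together make $P$ a quasiperfect $(pl,l)$-proper set. First, I would confirm that every $\omega_{1+il}$ appearing in $P$ is genuinely a cyclic permutation with orbit $Z_\mathrm{q}$, so that $P$ is an admissible set in the sense of \cref{def:plq}. Second, I would verify the follower-set criterion \eqref{eq:plq_followerset}. Once both hold, quasiperfectness is immediate: with $q = pl$ one computes $\lfloor (pl-1)/l \rfloor = p-1$, which matches $|P| = p-1$, so $P$ attains the bound of \cref{thm:upperbound}.

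For the first fact I would show $\gcd(1+il,\,pl)=1$ for each retained index $i$. Reducing modulo $l$ gives $1+il \equiv 1$, so $1+il$ is automatically coprime to $l$; the only possible obstruction is a common factor with the prime $p$. Since $l \mid (p-1)$ forces $\gcd(l,p)=1$, the map $i \mapsto il \bmod p$ is a bijection on the residues $\{0,\dots,p-1\}$, so exactly one index satisfies $1+il \equiv 0 \pmod{p}$. A direct check identifies it as $i=(p-1)/l$, because $\tfrac{p-1}{l}\cdot l = p-1 \equiv -1 \pmod p$. Excluding precisely this index is therefore exactly what guarantees $\gcd(1+il,\,pl)=1$, hence that each $\omega_{1+il} \in P$ is cyclic with orbit $Z_\mathrm{q}$.

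For the second fact I would use $\omega_\Delta^\beta[u] = (u+\beta\Delta)\bmod q$, so that $\Omega_l^{\omega_\Delta}(u)$ is merely the modular translate by $u$ of $\Omega_l^{\omega_\Delta}(0)$. Since translation by $u$ is a bijection of $Z_\mathrm{q}$, it suffices to verify \eqref{eq:plq_followerset} at $u=0$. Suppose, toward a contradiction, that two such follower sets intersect there, i.e. $\beta(1+il) \equiv \gamma(1+jl) \pmod{pl}$ for some $\beta,\gamma \in \{1,\dots,l\}$ and indices $i \neq j$. Reducing modulo $l$ gives $\beta \equiv \gamma \pmod l$, and as both lie in $\{1,\dots,l\}$ this forces $\beta=\gamma$. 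Substituting back leaves $\beta l(i-j) \equiv 0 \pmod{pl}$, i.e. $\beta(i-j)\equiv 0 \pmod p$; because $1 \leq \beta \leq l \leq p-1$, the factor $\beta$ is invertible modulo $p$, whence $i \equiv j \pmod p$ and thus $i=j$, contradicting $i \neq j$. So the follower sets are pairwise disjoint and \eqref{eq:plq_followerset} holds.

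The main obstacle is the disjointness argument, and its crux is exploiting the factorization $q=pl$ with $\gcd(p,l)=1$ through a two-stage reduction: modulo $l$ to pin down $\beta=\gamma$, then modulo $p$ to pin down $i=j$. The coprimality $\gcd(p,l)=1$, which is exactly what the hypothesis $l \mid (p-1)$ supplies, is what makes both stages collapse cleanly, and the bound $\beta \leq l \leq p-1$ is what licenses inverting $\beta$ modulo $p$.
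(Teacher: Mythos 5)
Your proposal is correct and follows essentially the same route as the paper: the same two-stage reduction of $(1+il)\beta \equiv (1+jl)\gamma \pmod{pl}$, first modulo $l$ to force $\beta=\gamma$ and then modulo $p$ to force $i=j$, followed by the same evaluation of the bound $\lfloor (pl-1)/l\rfloor = p-1$. You additionally supply two details the paper glosses over — the proof that excluding $i=(p-1)/l$ is exactly what makes $\gcd(1+il,pl)=1$, and the explicit invertibility of $\beta$ modulo $p$ in the last cancellation step (where the paper's stated congruence $i_1 \equiv i_2 \pmod q$ should really be modulo $p$) — both of which are welcome.
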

\begin{proof}
    We have to prove that
    \begin{enumerate}[label=(\roman*)]
        \item $P$ is a $(q, l)$-proper set with $q = pl$ (requiring the elements of $P$ to be cycles with orbit $Z_q$) and that
        \label{proof:plqset}
        \item $P$ is quasiperfect.
    \end{enumerate}

    Since we have $\gcd(1 + il, q) = 1$ for $i = 0, \ldots, p - 1, ~ i \neq (p - 1) / l$, the elements of $P$ are indeed cyclic permutations with orbit $Z_q$.
    The simple, but technical proof for this is omitted and we will instead focus on proving that $P$ is $(l, q)$-proper, which we will prove by contradiction and using \cref{def:plq}.

    Assume to the contrary that $P$ is not a $(q, l)$-proper set.
    Then, there are some $i_1, i_2 \in \{ 0, \ldots, p - 1 \} \backslash \{(p - 1) / l\}, ~ i_1 \neq i_2$ such that for some $\beta_1, \beta_2 \in \{ 1, \ldots, l \}$, we have
    \begin{equation*}
        \omega^{\beta_1}_{1 + i_1 l}[u] = \omega^{\beta_2}_{1 + i_2 l}[u].
    \end{equation*}

    By the definition of $\omega_\Delta$, we obtain \cref{eq:proof_contradiction_modq_unsimplified}, which can be simplified to \cref{eq:proof_contradiction_modq}:
    \begin{align}
        u + (1 + i_1 l) \beta_1 &\equiv u + (1 + i_2 l) \beta_2 \pmod q
        \label{eq:proof_contradiction_modq_unsimplified} \\
        \iff \qquad (1 + i_1 l) \, \beta_1 &\equiv (1 + i_2 l) \, \beta_2 \pmod q.
        \label{eq:proof_contradiction_modq}
    \end{align}
    Now since $q = pl$, the equivalence in \cref{eq:proof_contradiction_modq} must also hold modulo $l$, from which we can conclude that $\beta_1 \equiv \beta_2 \pmod l$.
    By the premise, we have $1 \leq \beta_1, \beta_2 \leq l$, so $\beta_1 = \beta_2$.
    Simplifying \cref{eq:proof_contradiction_modq} with $\beta_1 = \beta_2$, we get
    \begin{equation*}
        i_1 \equiv i_2 \pmod q
    \end{equation*}
    and, because $i_1, i_2 < p \leq q$, we conclude $i_1 = i_2$. This is a contradiction.

    $P$ is quasiperfect, since $m = |P| = p - 1$ and $q = pl$, so the upper bound from \cref{thm:upperbound} becomes
    \begin{equation*}
        p - 1 \leq \left\lfloor \frac{p l - 1}{l} \right\rfloor = \left\lfloor p - \frac{1}{l} \right\rfloor,
    \end{equation*}
    which is fulfilled with equality, implying quasiperfectness (but not perfectness in general).
\end{proof}

A construction similar to, but more restrictive than \cref{thm:construction} has been previously described in \cite[Theorem 1]{klove}, but, again, for a different application.
$(q, l)$-proper sets can also be constructed by adapting techniques applied for the construction of flash memory codes as presented in \cite{klove}, \cite{magnitudefour} and \cite{somecodes}.

\section{Application in Multi-Transmitter Systems}
\label{sec:multitx}

\tikzset{radiation/.style={{decorate, decoration = {expanding waves, angle = 90, segment length = 3.5pt}}}}
\tikzset{smallradiation/.style={{decorate, decoration = {expanding waves, angle = 90, segment length = 1.5pt}}}}
\tikzset{bsconnect/.style={thick, shorten <= -1pt, shorten >= -1pt}}

\newcommand{\basestation}{%
    \begin{tikzpicture}[scale = 0.4]
        \draw [thick] (0, 1) -- (0, 1.2) node[thick, circle, draw, fill = white, inner sep = 1pt] {};
        \draw [thick] (-0.6, -0.8) -- (0, 1)
            node[pos=0.1] (l1) {}
            node[pos=0.2] (l2) {}
            node[pos=0.3] (l3) {}
            node[pos=0.4] (l4) {}
            node[pos=0.5] (l5) {}
            node[pos=0.6] (l6) {}
            node[pos=0.7] (l7) {}
            node[pos=0.8] (l8) {}
            node[pos=0.9] (l9) {};

        \draw [thick] (0.6, -0.8) -- (0, 1)
            node[pos=0.1] (r1) {}
            node[pos=0.2] (r2) {}
            node[pos=0.3] (r3) {}
            node[pos=0.4] (r4) {}
            node[pos=0.5] (r5) {}
            node[pos=0.6] (r6) {}
            node[pos=0.7] (r7) {}
            node[pos=0.8] (r8) {}
            node[pos=0.9] (r9) {};

        \draw [thick] (0.0, -1) -- (0, 1)
            node[pos=0.1] (m1) {}
            node[pos=0.2] (m2) {}
            node[pos=0.3] (m3) {}
            node[pos=0.4] (m4) {}
            node[pos=0.5] (m5) {}
            node[pos=0.6] (m6) {}
            node[pos=0.7] (m7) {}
            node[pos=0.8] (m8) {}
            node[pos=0.9] (m9) {};

        \draw [bsconnect] (l2) -- (m2); \draw (m2) [bsconnect] -- (r2);
        \draw [bsconnect] (l3) -- (m3); \draw (m3) [bsconnect] -- (r3);
        \draw [bsconnect] (l4) -- (m4); \draw (m4) [bsconnect] -- (r4);
        \draw [bsconnect] (l5) -- (m5); \draw (m5) [bsconnect] -- (r5);
        \draw [bsconnect] (l6) -- (m6); \draw (m6) [bsconnect] -- (r6);

        \draw[thick, smallradiation, decoration={angle = 130}] (0, 1.2) -- +(90:0.5);
    \end{tikzpicture}
}

\newcommand{\transmitter}{
    \begin{tikzpicture}[scale = 0.6]
        \draw [thick, rounded corners = 0.2em] (0, 0) rectangle (0.8, 0.8);
        \draw [thick] (0.4, 0.8) -- (0.4, 1.3) node[circle, inner sep = 1pt, draw, thick, fill = white] {};
        \draw [smallradiation, decoration={angle = 130}] (0.4, 1.3) -- +(90:0.3);
    \end{tikzpicture}
}

Using the previously described PERIDOT code construction, we can now address the practical implementation of the scheme for use with LPWANs.
In our LPWAN model, we assume that end devices spontaneously emit uplink-only packets that one or multiple nearby base stations can pick up.
As shown in \cref{fig:lpwan_model}, the channels between any device and the base station are modelled as independent unidirectional packet erasure channels.
Furthermore, we assume that all packets originating from all devices globally are being processed by a central service, hereafter called \emph{backend}.
The assumption of the backend being centralized is not necessary, but will greatly simplify the subsequent explanations.
The backend has a database of devices, containing the previously received CNs $u$ for each device as well as each packet's timestamp and a very approximate device location.
We assume that if the base station picks up a packet containing the CN $v$,

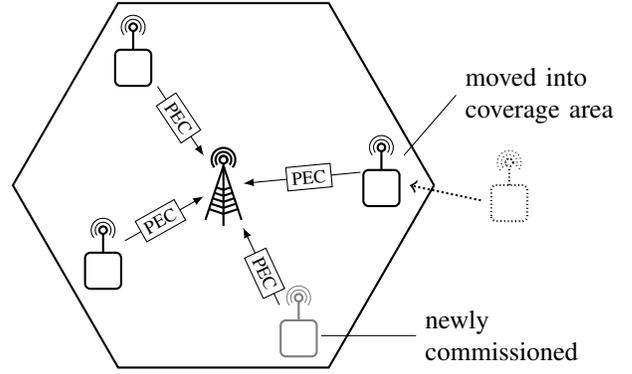
\begin{figure}
    \centering
    \begin{tikzpicture}
        \newdimen\R
        \R = 2.8cm
        \draw [thick] (0:\R) \foreach \x in {60, 120, ..., 360} {  -- (\x:\R) };
        \node (bs) [inner sep = 0pt] {\basestation};

        \node (tx_bottomleft) at (-1.6, -0.9) {\transmitter};
        \node (tx_topleft) at (-1.2, 1.8) {\transmitter};
        
        \node (movedtx_old) [densely dotted] at (3.8, 0.0) {\transmitter};
        \node (movedtx) at (2.1, 0.2) {\transmitter};
        \draw [->, densely dotted, thick] ($(movedtx_old.west) + (0, -0.2)$) -- ($(movedtx.east) + (0, -0.2)$);
        \node (movedtx_label) [align = left] at (4.2, 1.2) {moved into\\coverage area};
        \draw (movedtx_label.west) -- ($(movedtx) + (0.3, 0.2)$);
        
        \node (newtx) [color = gray] at (1.0, -1.8) {\transmitter};
        \node [align = left] (newtxlabel) at (3.7, -2.0) {newly\\commissioned};
        \draw (newtxlabel.west) -- ($(newtx) + (0.3, -0.2)$);
        
        \draw [shorten <= -0.1cm, -latex] ($(newtx) + (-0.3, 0.3)$) -- (bs) node[inner sep = 2pt, pos = 0.35, draw, rectangle, fill = white, sloped] {\scriptsize PEC};
        \draw [shorten <= -0.1cm, -latex] (movedtx) -- (bs) node[inner sep = 2pt, pos = 0.4, draw, rectangle, fill = white, sloped] {\scriptsize PEC};
        \draw [shorten <= -0.1cm, -latex] (tx_topleft) -- (bs) node[inner sep = 2pt, pos = 0.4, draw, rectangle, fill = white, sloped] {\scriptsize PEC};
        \draw [shorten <= -0.1cm, -latex] (tx_bottomleft) -- (bs) node[inner sep = 2pt, pos = 0.4, draw, rectangle, fill = white, sloped] {\scriptsize PEC};
    \end{tikzpicture}
    \caption{Model of LPWAN base station's coverage area}
    \label{fig:lpwan_model}
\end{figure}

\begin{enumerate}[label=(\Alph*)]
    \renewcommand{\theenumi}{\Alph{enumi}}
    \item it is very likely that it or a nearby base station has also picked up at least one of the previous $l$ packets from the same device, which has already been identified \label{case:known}
    \item or, otherwise, $l$ or more packets were erased by the channel, the device is newly commissioned or the device is mobile and has moved rapidly. \label{case:unknown}
\end{enumerate}

In all cases, the goal is to identify the transmitter, i.e. find out which permutation $\sigma$ the CN $v$ was drawn from.
The backend can handle all cases using a probability-based algorithm that is illustrated in \cref{fig:backend_algorithm} and which is only briefly outlined and underpinned by some examples in the scope of this paper.
It treats the vast majority of packets that are easy to identify (case \ref{case:known}) in a special way so that the few packets that cannot be identified so easily (case \ref{case:unknown}) can undergo computationally more complex treatment.

In an initial processing step for each received packet, the backend iterates over its database of known, identified devices and, for each device (hereafter called \emph{candidate}) computes a probability estimate that it was this particular device that transmitted $v$.
In many cases, candidates can be assigned a zero probability, e.g. in case a device was recently seen in a completely different geographic region.
For the candidates that remain, the backend computes the number of packet losses that must have occurred to make the observation of $v$ plausible, given that the last observed packet by the candidate contained the CN $u$ (from backend's database).
The number of packets transmitted after $u$, including $v$, is denoted by $\beta$ with $1 \leq \beta$ and we assume $\beta \leq q$.
Since the candidate device's assigned permutation $\sigma$ is known, this number has to fulfill $v = \sigma^\beta[u]$.
For a modular arithmetic progression-based $(q, l)$-proper set construction where $\sigma = \omega_\Delta$ for some $\Delta$, this simplifies to
\begin{equation}
    \beta \equiv \Delta^{-1} (v - u) \pmod q.
    \label{eq:map_tau}
\end{equation}
Now, using the PEC model assumption with erasure probability $\varepsilon$, the probability for $\beta$ packet losses is given by
\begin{equation*}
    P_\beta = (1 - \varepsilon) ~ \varepsilon^{\beta - 1}.
\end{equation*}
Intuitively, this means that if $v$ appears soon after $u$ in the permutation $\sigma$, it is reasonable to assume that $u$ and $v$ were transmitted by the same device.
The accuracy of probability estimates may be further augmented based on methods such as behavioral analyses and signal strength measurements.

\begin{figure}
    \centering

    \begin{tikzpicture}
        \node (search_matches) [draw, rectangle, align = left, inner sep = 5pt] at (0, 0) {\footnotesize Search for matching\\[-0.2em] \footnotesize identified devices};
        \node (decide_matches) [draw, diamond, align = left, aspect = 2.2] at (0, -1.6) {\footnotesize One probable\\[-0.2em] \footnotesize candidate only?};
        \node (decide_mac) [draw, diamond, aspect = 2.2] at (0, -3.6) {\footnotesize MAC is valid?};
        \node (identified) [draw, rectangle, inner sep = 5pt, minimum height = 1cm] at (0, -5.5) {\footnotesize Device identified!};
        \node (viterbi) [draw, rectangle, inner sep = 5pt, align = left, minimum height = 1cm] at (3, -5.5) {\footnotesize Treat as Hidden \\[-0.2em] \footnotesize Markov Model};

        \draw [-latex] (search_matches) -- (decide_matches);
        \draw [-latex] (decide_matches.south) -- (decide_mac.north) node [midway, anchor = west] {Yes};
        \draw [-latex] (decide_mac.south) -- (identified) node [midway, anchor = west] {Yes};

        \draw [-latex] (decide_matches.east) -| (viterbi.north) node[midway, anchor = south east] {No};
        \draw (decide_mac.east)  -| (viterbi.north) node[midway, anchor = south east] {No};
    \end{tikzpicture}

    \caption{Overview of backend identification algorithm}
    \label{fig:backend_algorithm}
\end{figure}
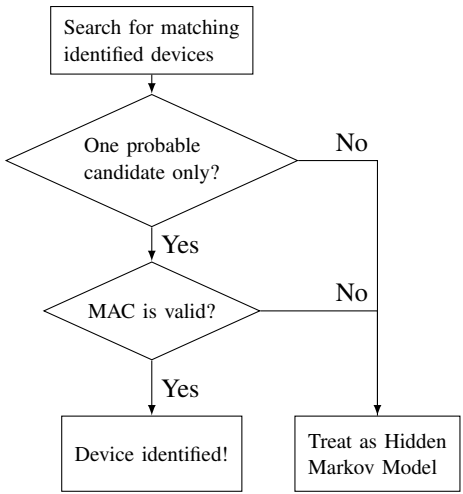

For the packets that belong to case \ref{case:known}, which we expect to be the vast majority of packets, these criteria will likely let a single device emerge as the only probable transmitter.
For this device, the MAC value is checked (i.e., the MAC gets a dual-use as both authenticity check and final identity check) and if it also matches, the packet can be attributed to the transmitter.

All other cases, that is case \ref{case:unknown} (no device matches the packet with high enough probability) or case \ref{case:known}, but multiple devices match, are treated in the same way:
the packet has to be processed by a more sophisticated algorithm that treats the entirety of all yet unidentified devices as a Hidden Markov Model (HMM).
It is for this that we constructed $(q, l)$-proper sets and assigned special permutations to devices - thanks to them, the backend will still be able to identify devices, almost always upon reception of the \emph{second} packet by the same device.

A HMM consists of a Markov chain, i.e. a set of states the system can be in, as well as a set of possible observable values.
In our case, the states of the Markov chain are given by the combined state of the \emph{entirety} of all the unidentified LPWAN end devices.
This state consists of \emph{all} the previously successfully transmitted CNs of \emph{all} unidentified devices, which is an integer from the alphabet $Z_\mathrm{q}$ for each device.
Overall, the state can thus be described by the vector $\underline u \in Z_\mathrm{q}^{d}$ for $d$ unidentified devices in the whole network, so there are a total of $q^{d}$ possible states.
In our case, an observation in the HMM is made whenever a packet is received which could not be unambiguously assigned to a device in the previous steps.

In this context, the goal of identifying the transmitter of each packet can be reformulated as guessing the most likely sequence of hidden states from the observations, since this sequence also automatically reveals which transitions must have occurred in order to produce it.
Trivially, if two state vectors differ in more than one CN, the transition probability between them is zero since packet transmissions can be assumed to occur one at a time.
Other transition probabilities between hidden states can be computed as previously explained (compute $P_\beta$, possibly augment with metadata).

The \emph{Viterbi algorithm} can be employed to find the most likely sequence of hidden states efficiently.
Due to the large number of states in the Markov chain ($q^{d}$), a practical implementation has to stop tracking unlikely paths if their probability drops below some threshold.
The way the algorithm works is best explained by example.
In order to simplify explanations, we will use unrealistically small values for $q$.

In \cref{example:known} and \cref{example:unknown}, we assume a construction as in \cref{thm:construction} with $l = 10$ and $p = 11$, so the set of increments is $\left\{ 1, 21, 31, 41, 51, 61, 71, 81, 91, 101 \right\}$ and $q = 110$.
Furthermore, we assume that identification is performed only based on CNs and an approximate (within tens to hundreds of kilometers accuracy) device geolocation.

\begin{example}
    The backend has already identified and located the asset tracker device T with increment $\Delta_\mathrm{T} = 31$ in Rotterdam.
    The last CN known to have originated from T is $u = 12$.
    Several hours later, a base station in Amsterdam, $60\,\mathrm{km}$ away, receives a packet with a CN $v = 43$, which is a plausible next CN for T.
    There are no other identified devices in all of Europe for which $v = 43$ is plausible, so we check the packet's MAC which is also valid.
    It can be assumed that the device has been successfully identified.

    The only reasonable scenario for erroneous identification is, if some other previously inactive device that was just commissioned happens to have $v = 43$ as first CN to transmit, happens to be in the same geographic area as T, and the MAC happens to be valid, which is extremely unlikely overall.
    \label{example:known}
\end{example}

With large enough values of $p$, $l$ and thereby $q$, we expect the error scenario in \cref{example:known} to hardly ever occur.

\begin{example}
    The backend has not yet identified the devices with increments $1$, $21$ and $91$.
    A base station in Singapore receives a packet with CN $77$, which is implausible for any of the already identified devices in the area (case \ref{case:unknown} applies).
    At this point, the packet cannot yet be processed further, because the transmitter cannot be identified solely based on this one packet.
    Thirty minutes later, the same base station receives a packet with CN $9$.
    The only plausible increment for the sequence $77 \to 9$ is $21$ with $\beta - 1 = 1$ erasure, since, by \cref{eq:map_tau}, we have $\beta \equiv 21^{-1} ~ (9 - 77) \equiv 2 \pmod q$.
    Since the MAC validity checks for both packets succeed, the backend can now process and deliver the current and previous packet.

    An increment of $1$ would have required $41$ erasures ($\beta = 42$) and an increment of $91$ would have required $101$ erasures to go from $77$ to $9$.
    Both of these scenarios are extremely unlikely.
    \label{example:unknown}
\end{example}

In the following example we will use the two trivial incrementing / decrementing permutations $\sigma_\mathrm{inc}$ and $\sigma_\mathrm{dec}$ given in \cref{ex:inc_dec} for an (unrealistically small) alphabet size $q = 20$.

\newcommand{\sysstate}[5] {
    \begin{scope}[shift={#1}]
        \node at (0, 0) (#2) {};
        \draw [very thick, draw = #5, fill = black!10!white] (0, 0) circle (0.45cm);
        \node at (0,  0.2) {#3};
        \node at (0, -0.2) {#4};
        \draw (-0.35, 0) -- (0.35, 0);
    \end{scope}
}

\begin{figure}
    \centering
    \begin{tikzpicture}
        \sysstate{(0, 0)}{state1}{3}{9}{black}
        \sysstate{(2.5, 0)}{state2}{3}{7}{black}
        \sysstate{(5,  1)}{state3}{5}{7}{black}
        \sysstate{(5, -1)}{state4}{3}{5}{black}
        \sysstate{(7.5, 1)}{state5}{5}{3}{black}
        \sysstate{(7.5, -1)}{state6}{3}{3}{black}

        \draw [shorten <= 0.1cm] ($(state1) + (0, 0.2)$) -- (1, 1.2) node[anchor = west, align = left] {Last CN attributed\\to $\sigma_\mathrm{inc}$};
        \draw [shorten <= 0.1cm] ($(state1) + (0, -0.2)$) -- (1, -1.2) node[anchor = west, align = left] {Last CN attributed\\to $\sigma_\mathrm{dec}$};
        
        \draw [-latex, shorten <= 0.4cm, shorten >= 0.4cm, dashed, draw = red!50!black] (state1.east) -- (state2.west) node[midway, above] {\footnotesize $\varepsilon (1 - \varepsilon)$};
        \draw [-latex, shorten <= 0.4cm, shorten >= 0.4cm] (state2.east) -- (state3.west) node[midway, sloped, above] {\footnotesize $\varepsilon (1 - \varepsilon)$};
        \draw [-latex, shorten <= 0.4cm, shorten >= 0.4cm, dashed, draw = red!50!black] (state2.east) -- (state4.west) node[midway, sloped, above] {\footnotesize $\varepsilon (1 - \varepsilon)$};
        \draw [-latex, shorten <= 0.4cm, shorten >= 0.4cm] (state3.east) -- (state5.west) node[midway, sloped, above] {\footnotesize $\varepsilon^3 (1 - \varepsilon)$};
        \draw [-latex, shorten <= 0.4cm, shorten >= 0.4cm, dashed, draw = red!50!black] (state4.east) -- (state6.west) node[midway, sloped, above] {\footnotesize $\varepsilon (1 - \varepsilon)$};

        \node [blue!50!black] at (1.25, -2) {``7''};
        \node [blue!50!black] at (3.75, -2) {``5''};
        \node [blue!50!black] at (6.25, -2) {``3''};
    \end{tikzpicture}
    \caption{System states $\underline u$ with transition probabilities given observations (blue)}
    \label{fig:viterbi_example}
\end{figure}
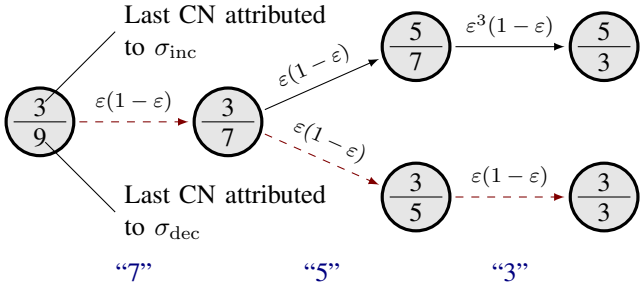

\begin{example}
    The backend receives the following sequence of CNs: $1 \to 2 \to 10 \to 9 \to 3 \to 7 \to 5 \to 3 \to 4$.
    It is likely that $1$, $2$ and $3$ originate from $\sigma_\mathrm{inc}$ and $10, 9, 7$ from $\sigma_\mathrm{dec}$.
    Upon reception of CN $v_1 = 5$, both $\sigma_\mathrm{inc}$ and $\sigma_\mathrm{dec}$ are equally likely sources of $v_1$, so the packet cannot yet be delivered and has to be processed by the HMM for identification.
    There are now two equally likely current system states: $\underline u_\mathrm{1a} = (5, 7)$ and $\underline u_\mathrm{1b} = (3, 5)$, where the CN attributed to $\sigma_\mathrm{inc}$ comes first in the tuple.

    The value $v_2 = 3$ that the backend receives just afterwards provides a hint for identification: the most likely scenario (just a single packet loss) now is that $v_2 = 3$ came from $\sigma_\mathrm{dec}$ and that the previous system state was $\underline u_\mathrm{1b} = (3, 5)$, i.e. the current system state is $u_2 = (3, 3)$.
    So, if the received $v_2$ is assumed to belong to $\sigma_\mathrm{dec}$, it is significantly more likely that $v_1 = 5$ was also transmitted by this device.
    Therefore, the value of $v_2$ that was received at a later point in time influences the backend's interpretation of $v_1$ and allows it to deliver the packet with $v_1$, though delayed.
    This example is also illustrated in \cref{fig:viterbi_example}, where the most likely path as determined by the Viterbi algorithm is dashed and marked in red.
    \Cref{fig:viterbi_example} does not show all the extremely unlikely system states and unlikely transitions.
    The subsequently received CN $4$ confirms the identification that was made based on $v_2 = 3$:
    We conclude that, almost certainly, the device with $\sigma_\mathrm{inc}$ has sent the sequence $1 \to 2 \to 3 \to 4$ and the device with $\sigma_\mathrm{dec}$ has sent $10 \to 9 \to 7 \to 5 \to 3$.
    
    \label{example:collision}
\end{example}

The kind of resolution presented in \cref{example:collision} is not always possible; in case identification is ambiguous, the better course of action can be to drop the packet than to risk erroneous identification.
Note that this ambiguity occurs only very rarely if we use a sufficiently large alphabet $Z_\mathrm{q}$.
The backend is aware of the ambiguity, and the residual probability of being unable to identify transmitters and having to discard packets becomes negligible compared to the PEC's erasure probability.

\section{Exemplary Case: Sigfox}
\label{sec:sigfox}
The French company \emph{Sigfox} \cite{sigfoxwebsite} operates a global LPWAN.
Their network protocol uses $32$-bit identifiers and $12$-bit SNs, thus allowing for up to $2^{32}$ addressable end devices and a nonce reuse after $2^{12} = 4096$ packets \cite{sigfoxspec}.
If Sigfox employed PERIDOT in a new protocol revision, they could increase the theoretical maximum number of devices, prolong nonce reuse cycles and reduce the number of bits that have to be transmitted all at the same time.

In order to achieve this, we construct a $(q, l)$-proper set based on \cref{thm:construction}.
For instance, we may want to encode the CN on a total of $38$ bits.
We assume that the occurrence of 49 or more consecutive packet losses is extremely unlikely, so we choose $l = 50$.
Since we need to have $q = p l \leq 2^{38}$ with $l \, | \, (p - 1)$ and $p$ must be prime, we choose the largest such prime, which is $p = 5\,497\,554\,151$.
Thus, we get increments for $m = p - 1 = 5\,497\,554\,150$ different devices.

As a result, compared to the current protocol, we
\begin{itemize}
    \item reduced packet size by 6 bit, resulting in a 5\% reduction in energy consumption for single-bit payload packets,
    \item increased the nonce reuse cycle by a factor of $q / 2^{12} \approx 67.1 \cdot 10^6$ making replay attacks infeasible and
    \item increased the theoretical maximum number of identifiable devices by approximately $28\%$
\end{itemize}
with negligible implementation costs on the end device side.

\section{Conclusion}
We introduced the PERIDOT scheme and described how identifiers and SNs can be completely replaced by transmitting CNs based on specially constructed permutations.
One particularly suitable construction was presented, though entirely different constructions may also be feasible.
We showed that PERIDOT can reduce uplink packet lengths and thus energy consumption, improve security and increase the theoretical maximum number of addressable devices.

The basic concept of an algorithm for the backend has been outlined, how exactly to incorporate geolocation information, behavior analysis and other metadata into state transition probabilities for the HMM remain open questions though.
Another research avenue would be to simulate a global LPWAN employing PERIDOT so that the influence of different parameters on the system and its scalability can be tested.

We found and referenced mathematical concepts that are related to PERIDOT in the theory of asymmetric error correction codes.
Still, PERIDOT remains to be classified and its relation to error correction codes needs to be better understood.

\bibliographystyle{IEEEtran}
\bibliography{IEEEabrv, bib}

\end{document}